\documentclass[11pt,a4paper]{article}

\usepackage[margin=1in]{geometry}
\usepackage{amsmath}
\usepackage{amssymb}
\usepackage{amsthm}
\usepackage{booktabs}
\usepackage[numbers]{natbib}
\usepackage[hidelinks,unicode]{hyperref}

\newcommand{\doi}[1]{\href{https://doi.org/#1}{\nolinkurl{doi:#1}}}

\hypersetup{
  pdftitle={Certified Split Windows for Parallel Lexing: Recovering Boundaries Where No Byte Certifies},
  pdfauthor={Nicklas Nidhögg},
  pdfsubject={Parallel lexical analysis; deterministic finite automata; maximal munch tokenization},
  pdfkeywords={parallel lexing, tokenization, deterministic finite automata, maximal munch, split windows}
}

\theoremstyle{definition}
\newtheorem{definition}{Definition}

\theoremstyle{plain}
\newtheorem{lemma}{Lemma}
\newtheorem{theorem}{Theorem}
\newtheorem{corollary}{Corollary}
\newtheorem{proposition}{Proposition}
\newtheorem{invariant}{Invariant}

\newcommand{\code}[1]{\texttt{#1}}
\newcommand{\before}{\mathsf{before}}

\title{Certified Split Windows for Parallel Lexing:\\Recovering Boundaries Where No Byte Certifies}
\author{Nicklas Nidh\"ogg\\[0.4em] \normalsize Independent Researcher\\ \normalsize
\texttt{nicklas.nidhogg@gmail.com}\\ \normalsize ORCID: 0009-0006-6161-6150}
\date{August 2026}

\begin{document}

\maketitle

\begin{abstract} A certified split point lets a parallel lexer cut unlexed input at a single byte with the serial token
stream provably preserved, but several conventional token sets in the predecessor's controlled study certify no byte
once string, comment, or whitespace-run forms are included (arXiv:2608.03473). We generalize from a byte to a bounded
window: a byte string after which the position where the current token began is known, regardless of surrounding
context. We certify the directly usable form of that recovery: the token covering the window's final byte begins at the
reported origin. The certificate is conditional on occurrence and may be vacuous; every applicability figure counts only
windows carrying an asserted completely tokenizable occurrence witness. We give a conservative model of a maximal-munch
scanner's possible histories across a window, prove it sound, and decide reachability in that model exactly by
exhausting a finite quotient of its reachable configurations, so every answer of the unbudgeted procedure is either a
certified window with its origin or a proof that the model admits none. Within the stated flat, completely-tokenizable
scope, model-positive answers are semantic certificates; negatives are relative to the conservative model, which
deliberately refuses some windows a greedy scanner would allow. A token set in which some token matches the empty string
is decided through its positive-width equivalent, the same automaton entered through a start state that does not accept,
which changes no scan. In a sample of 400 random token sets, 322 of the 337 sets certifying no byte gain a witnessed
window, with zero inconclusive searches, and every exact-empty row of the predecessor's study gains a witnessed window
of two to four bytes. The analysis runs once after automaton construction, using only the compiled tables and no input.
\end{abstract}

\section{Introduction}
\label{sec:introduction}

The predecessor of this paper derives, from a compiled token set, the complete set of bytes at which unlexed input can
be cut with the serial token stream preserved, and proves the condition necessary as well as
sufficient~\cite{nidhogg2026splitpoints}. Its sharpest limitation is its own applicability table: several of the studied
conventional token sets certify \emph{no} byte, because a single string form, comment form, or whitespace run gives some
non-initial live state a transition on every candidate, and a fresh 400-grammar random sweep generated for this study
reproduces the pattern, 337 of its 400 sets certifying none. This paper is about the object that refusal leaves
standing: a byte \emph{string} after which the start of the token covering the window's final byte is pinned, regardless
of surrounding context.

Concretely, we ask: for a token set compiled to a DFA and scanned by maximal munch, is there a window $W = w_0 \cdots
w_{k-1}$ and an offset $o$ with $0 \leq o < k$ such that in \emph{every} completely tokenizable input containing $W$,
the token covering the occurrence's final byte begins exactly $o$ bytes into it? A worker that finds $W$ in a
completely tokenizable input may then begin scanning at the recovered boundary with no speculation, no state
enumeration, and no DFA-state-recovery pass over the input, exactly as at a certified byte, which is the $k = 1$ case.

Our contributions:

\begin{itemize}
\item
A conservative model of the scanner's possible token-prefix histories across a window, replacing a refuted
natural predecessor, and a soundness proof by a representation
invariant
(Section~\ref{sec:model}, Section~\ref{sec:soundness}).
\item
A finite quotient of the model's reachable configurations, exact for them by a stated single-occupancy invariant,
turning breadth-first search over windows into a terminating decision procedure for the model
(Section~\ref{sec:quotient}).
\item
Two specializations: at length one the model certifies exactly the bytes the published predicate reports, so the
multi-byte construction is that predicate's conservative continuation; and over a prefix code the certified windows that
occur are exactly the synchronizing splits of code theory whose right half sits inside one codeword, which places the
classical case as the special case it is (Section~\ref{sec:specialization}).
\item
An evaluation over all six exact-empty rows of the prior study's applicability table, one new cumulative
variant, and 400 random token sets generated for this paper: the witnessed rescue rate where no byte certifies,
the window lengths that suffice for the studied C-like and JSON tokenizations, and rewind-stress checks of
1{,}079{,}392 generated executions that scanned through the window and contained at least one rewind,
418{,}466 of them completely tokenizable, with zero disagreements against the shipped scanner
(Section~\ref{sec:evaluation}).
\end{itemize}

The probe decides certificates and model-search results offline from the compiled tables after automaton construction;
the occurrence witnesses and scanner checks are generated executions against the shipped scanner. Every empirical
aggregate and table entry reported here is printed and asserted by the probe in the munch repository~\cite{munch}, the
named rows at release v1.3.3 and the random sweep at the later commit Section~\ref{sec:evaluation} names: a drifted
number fails the test suite.

\section{Preliminaries}
\label{sec:preliminaries}

We inherit the scanner model of~\cite{nidhogg2026splitpoints}. A token set compiles to a DFA $A$ with initial state
$q_0$; scanning is by maximal munch: from the current position the scanner runs $A$ as far as a transition exists, emits
the token of the last accepting configuration passed, resumes immediately after it, and restarts in $q_0$. An input is
\emph{completely tokenizable} when this process consumes it exactly. The setting is \emph{flat}: one fixed token set
compiled to one DFA scanned from one fixed $q_0$, with no lexical modes, no mode stack, and no semantic scanner state.
$A^+$ denotes the live subautomaton: states both reachable from $q_0$ and co-accessible to acceptance, with only
transitions between live states retained. A token may match the empty string; the scan never emits it, and the next
lemma is how every statement below still assumes a start state that does not accept.

\begin{lemma}[Positive-width equivalent]
\label{lem:unroll}
Let $A$ be the DFA of a token set whose start state $q_0$ accepts, and let $A'$ be $A$ with a fresh start state
$q_0'$ that carries $q_0$'s outgoing transitions and does not accept, $q_0$ itself retained. Then the maximal-munch
scans of $A$ and $A'$ agree on every input, $A'$ accepts exactly the nonempty words $A$ accepts, and no transition of
$A'$ enters $q_0'$.
\end{lemma}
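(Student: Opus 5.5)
The plan is to prove the three claims in reverse order of difficulty, all from one structural observation. Write $\delta$ for the transition function of $A$ and $\delta'$ for that of $A'$. By construction $\delta'(q_0',a)=\delta(q_0,a)$ for every byte $a$, and $\delta'(q,a)=\delta(q,a)$ for every original state $q$. Here a missing transition is treated as missing in both automata.

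First, no transition of $A'$ enters $q_0'$. The state $q_0'$ is fresh, so no transition of $A$ targets it. The outgoing transitions copied onto $q_0'$ target the states $\delta(q_0,a)$, which are original states, so they do not target $q_0'$ either.

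Second, an induction on word length gives $\delta'(q_0',w)=\delta(q_0,w)$ for every nonempty $w$, with one side defined exactly when the other is. The base case $|w|=1$ is the copying. In the inductive step, both runs sit in the same original state after the first byte, and from there $A$ and $A'$ have identical transitions. Hence for nonempty $w$, $A'$ accepts $w$ if and only if $A$ does, because the final states coincide and acceptance of original states is unchanged. The empty word is rejected by $A'$, since $q_0'$ does not accept. So $A'$ accepts exactly the nonempty words of $A$.

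Third, the scans agree on every input. I would argue by induction on the number of emitted tokens that both scanners resume at the same position. Fix a resume position $i$. The maximal-munch step of $A$ runs from $q_0$ and records the last accepting configuration passed. The accepting configuration at length $0$ is never emitted, because the scan never emits an empty token. The step therefore selects the largest $\ell\ge 1$ such that the run on $x_i\cdots x_{i+\ell-1}$ is defined and accepting, where $x$ is the input. By the second step, the run of $A'$ on every nonempty prefix is defined exactly when that of $A$ is and lands in the same state. So the set of candidate lengths $\ell\ge1$, the run's stopping point, and the chosen $\ell$ all coincide. The token label attached to the accepting state also coincides, since it is the same state. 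If no $\ell\ge1$ exists, both scanners fail at $i$ and the input is not completely tokenizable under either. Both restart (in $q_0$, respectively $q_0'$) at $i+\ell$, which closes the induction. In particular the emitted token sequences are equal, and complete tokenizability is preserved in both directions.

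The main obstacle is the third step, and it is conceptual rather than technical. It requires pinning down the convention inherited from~\cite{nidhogg2026splitpoints} that an empty match at $q_0$ is never emitted, and confirming that ``runs as far as a transition exists'' yields the same stopping point in both automata. Once the length-$0$ configuration is explicitly excluded as a candidate, the agreement of runs on nonempty prefixes settles the rest.
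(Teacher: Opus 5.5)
Your proposal is correct and follows the paper's proof. It uses the same three observations: $q_0'$ is fresh and its copied transitions land only on original states, the runs from $q_0$ and $q_0'$ coincide from the first byte onward, and by the scan's convention an acceptance counts only after a byte is consumed. Your explicit inductions on word length and on emitted tokens spell out what the paper states in a single pass.
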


\begin{proof}
Under the scan's convention only an accepting configuration reached after consuming a byte determines an emitted token,
so every emitted token has positive width and a scan records an acceptance only after consuming a byte. From $q_0$ and
from $q_0'$ the first byte leads to the same state, since $q_0'$ carries $q_0$'s transitions, and from there the two
scans traverse the same states and record the same accepting positions; the emitted token and the restart offset
therefore agree, and both scans restart in their own start state. The accepted words of $A'$ are the words of length at
least one accepted by $A$, the empty word being lost because $q_0'$ does not inherit $q_0$'s acceptance. No transition
of $A$ targets $q_0'$, which is fresh, and the copied transitions target $q_0$'s successors, so nothing enters $q_0'$.
\end{proof}

Throughout, a token set whose start state accepts is read through $A'$, and the artifact compiles every such token set
this way before deciding anything about it, a set whose start does not accept passing through unchanged: the unrolled
automaton has one state more and the same scan, so nothing below loses generality by assuming that $q_0$ does not
accept; from here on $A$ and $q_0$ name the automaton so read and its start, $q_0'$ for a nullable set. For a nullable
set the fresh start is never re-entered; for any other, the re-entrancy of $q_0$ remains the separate condition
Section~\ref{sec:specialization} treats.

\begin{definition}[Certified split window]
\label{def:window}
Let $W = w_0 \cdots w_{k-1}$ with $k \geq 1$ and let $o \in \{0, \ldots, k-1\}$. The pair $(W, o)$ is a
\emph{certified split window}
for a token set when, for every completely tokenizable input $x$ containing $W$ at offset $t$, the token of the
maximal-munch tokenization of $x$ that contains the byte at offset $t + k - 1$ begins at $t + o$.
\end{definition}

\begin{definition}[Witnessed window]
\label{def:witnessed}
A certified split window $(W, o)$ is \emph{witnessed} when some completely tokenizable input contains $W$.
\end{definition}

Definition~\ref{def:window} quantifies over the inputs containing $W$ and is therefore vacuously true when no completely
tokenizable input contains it, and the vacuity is not hypothetical: over $\{\code{0}, \code{00}, \code{01}\}$ the model
below certifies the window \code{1001} at origin 2, yet no completely tokenizable input contains \code{1001}, since
every \code{1} is the tail of a greedily chosen \code{01}, a token boundary therefore follows the window's first byte,
maximal munch must then consume \code{00}, and the final \code{1} sits at a boundary no token starts. The preceding
argument proves non-occurrence; the artifact asserts the model prediction and the empty result of its bounded targeted
witness search. Every applicability figure in this paper counts only witnessed certificates: the named rows pin their
witness inputs in the artifact, and for each of the 322 witnessed random grammars the probe constructs and verifies a
concrete completely tokenizable input, with the 322 aggregate asserted; a certificate the bounded witness search cannot
witness is reported as unresolved, never as a rescue.

For $k = 1$ Definition~\ref{def:window} is the boundary guarantee of a certified split point, since the token
containing the single byte
begins at it. Three guarantees should be kept apart, and this paper certifies the strongest: model unanimity
implies that the covering token's origin is fixed, which implies that some fixed safe boundary exists inside the
window, and neither converse holds. Over $\{\code{a}, \code{ab}, \code{b}\}$ at \code{ab} the covering origin
is fixed yet the model refuses (Section~\ref{sec:strictness}); over
$\{\code{a}, \code{abx}, \code{b}, \code{x}\}$ at \code{ab} the byte \code{a} always begins a token, so
offset $0$ is a fixed safe boundary, yet the covering token's origin is not fixed, since the final \code{b}
belongs to \code{b} in the input \code{ab} and to \code{abx} in \code{abx}. The weakest guarantee is already
usable, since a worker can cut at the known boundary and rescan the window's suffix; the covering-token form is
the sufficient, deliberately stronger property this forward origin-recovery model certifies, and it hands the
worker its resumption point directly. The model therefore has two distinct sources of false negatives:
covering-origin certification is stricter than locating some safe boundary, and the cloud conservatively
represents extra segmentations and may refuse even a true covering-origin certificate. Note also what the
definition does not require: it says nothing about the tokens overlapping the window's earlier bytes, and it does
not require the boundary to be the only one inside the window.

\section{The model}
\label{sec:model}

A worker cutting blind knows only that in the final segmentation, the window's first byte is consumed from \emph{some}
live token-prefix state, either inside a token that began at some unknown earlier offset or exactly at a boundary; the
acceptance-gated seed below carries the boundary case, including a window at the start of the input. The model tracks
a \emph{cloud} of hypotheses $(q, \omega)$: a live state $q$ paired with an origin $\omega$, either $\before$ for a
token that began before the window or an in-window offset. The initial cloud $C_0$ is every live state paired with
$\before$.

Reading window byte $w_j$ maps $C_j$ to $C_{j+1}$ by two rules:

\begin{itemize} \item \textbf{Direct step.} A pair $(q, \omega)$ whose state consumes $w_j$ into a live state moves
there with its origin unchanged. A pair whose state cannot consume $w_j$ into a live state is an impossible history
and is dropped, never restarted. \item \textbf{Acceptance-gated seed.} If some pair in $C_j$ is accepting, one fresh
pair $(\delta(q_0, w_j),\, j)$ is seeded, provided that target is live: a token can begin at offset $j$ only where the
previous token could have ended or at the input start, the case the initially open gate represents, and tokens end
only where the automaton accepts. \end{itemize}

One rename accompanies the direct step: where $q_0$ is not re-entrant in $A^+$, a pair stepping \emph{from} $q_0$
is beginning a token, so its origin becomes the current offset; where a non-empty live path returns to $q_0$, the
rename is disabled, since reaching $q_0$ then no longer identifies a boundary. This is the same re-entrancy
condition the length-one certificate carries~\cite{nidhogg2026splitpoints}.

Clouds are \emph{sets} of pairs; the set semantics is load-bearing below, where two rules inserting the same
pair yield one element. The window is \emph{certified at origin $o$} when $C_k \neq \varnothing$ and there is an
$o \in \{0, \ldots, k-1\}$ with $\omega = o$ for every $(q, \omega) \in C_k$; an empty cloud certifies
nothing, matching the implementation, which refuses it, and the empty cloud is absorbing: every step of it is
empty. Non-emptiness guarantees nothing in the other direction: a certified window may be vacuous under
Definition~\ref{def:window}, the separation Definition~\ref{def:witnessed} exists to make. Agreement on the
state alone is not enough: learning that
the scan is inside a string literal is knowledge, but not a boundary. Throughout, $q_0$ is \emph{re-entrant} when
some live transition of $A^+$ targets it; non-re-entrancy means no live edge enters $q_0$ at all.

\paragraph{The discarded predecessor.} A natural variant restarts a trajectory that cannot consume the byte,
treating the failure point as a token boundary, in place of the acceptance-gated seed. That variant is not
merely unproved but refuted. Over the token set $\{\code{a}, \code{abc}, \code{bx}, \code{x}\}$ and window
\code{abx}, its cloud after \code{ab} is the single pair carrying origin 0 toward \code{abc}; \code{x} kills that
trajectory, the restart replaces it with a token beginning at offset 2, nothing else survives, and the variant
certifies $(\code{abx}, 2)$. Yet the input \code{abx} itself is completely tokenizable as \code{a} followed by
\code{bx}, so the token covering the final byte begins at offset 1: the certificate is false at a witnessed
occurrence. A failing trajectory is an impossible history, not a boundary, and restarting it manufactures
support for origins no execution justifies. The repaired model certifies \code{abx} at origin 1, where the
scanner does cut, and the case is carried as an asserted row of the artifact.

\section{Soundness}
\label{sec:soundness}

Fix a completely tokenizable input $x$ containing $W$ at offset $t$. For $j \in 1..k$ let $\sigma_j$ be the start
of the token that the final maximal-munch segmentation of $x$ assigns to the byte at $t+j-1$, let
$\rho_j = \delta^*(q_0,\, x[\sigma_j \,..\, t+j))$ be that token's prefix state after consuming through byte
$t+j-1$, and let $\omega_j = \sigma_j - t$, or $\before$ when $\sigma_j < t$.

\begin{lemma}[Representation]
\label{lem:representation}
For every $j \in 1..k$, the pair $(\rho_j, \omega_j)$ is in $C_j$.
\end{lemma}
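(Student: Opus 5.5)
We argue by induction on $j$, comparing at each byte the single true history of the final maximal-munch segmentation with the rules that build $C_{j+1}$ from $C_j$. We need three facts about that segmentation, all available because $x$ is completely tokenizable. First, every emitted token is accepted by $A$, so each of its prefix states is reachable and co-accessible, hence live; thus every $\rho_j$ is a state of $A^+$ and every transition between consecutive prefix states of one token is a live transition. Second, by Lemma~\ref{lem:unroll} we may take $q_0$ non-accepting, so every token has positive width. Third, a token ends exactly where $A$ accepts the token read so far.

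\textbf{Base case $j = 1$.} There are two cases.
\begin{itemize}
\item If $\sigma_1 < t$, the byte at $t-1$ lies in the same token. Then $\rho_1 = \delta(q', w_0)$, where $q' = \delta^*(q_0, x[\sigma_1..t))$ is live, so $(q', \before) \in C_0$. The direct step carries it to $(\rho_1, \before)$. If $q' = q_0$ the rename could apply, but that would require $\sigma_1 = t$, which contradicts $\sigma_1 < t$; more simply, $q'$ is a proper nonempty prefix state, and $q' = q_0$ is possible only when $q_0$ is re-entrant, in which case the rename is disabled.
\item If $\sigma_1 = t$, then $\rho_1 = \delta(q_0, w_0)$, which is live. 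The seed at offset $0$ produces $(\rho_1, 0)$ provided the gate is open, that is, provided some pair of $C_0$ is accepting. This is where I expect the main obstacle: $C_0$ consists of all live states, so it contains an accepting state exactly when the token set accepts some nonempty word. That holds whenever any token is emitted, and a token is emitted here. Alternatively, the pair $(q_0, \before)$ lies in $C_0$ (since $q_0$ is live) and steps to $\rho_1$; the rename then gives origin $0$ when $q_0$ is non-re-entrant. The seed route covers both cases uniformly.
\end{itemize}

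\textbf{Inductive step from $j$ to $j+1$.} Again there are two cases.
\begin{itemize}
\item If $\sigma_{j+1} = \sigma_j$, the byte $w_j$ continues the same token. Then $\rho_{j+1} = \delta(\rho_j, w_j)$ is live, and by the induction hypothesis the direct step moves $(\rho_j, \omega_j)$ to $(\rho_{j+1}, \omega_j) = (\rho_{j+1}, \omega_{j+1})$. We must rule out the rename corrupting the origin. The rename fires only when stepping from $q_0$ with $q_0$ non-re-entrant, and $\rho_j$ is the state reached after consuming at least one byte of the token. Hence $\rho_j = q_0$ would require a nonempty live path back into $q_0$, that is, re-entrancy, under which the rename is disabled.
\item If $\sigma_{j+1} = t + j$, a token begins at offset $j$. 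The previous token ends at byte $t+j-1$, so its final prefix state $\rho_j$ is accepting, and $(\rho_j, \omega_j) \in C_j$ opens the gate. The seed then inserts $(\delta(q_0, w_j), j)$, which is live because it is a prefix state of an emitted token, and this pair equals $(\rho_{j+1}, \omega_{j+1})$.
\end{itemize}

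In every case the required pair appears in $C_{j+1}$. Because clouds are sets, the pair is present whichever rule produced it, and this completes the induction.
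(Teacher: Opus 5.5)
Your proof is correct and follows the paper's argument essentially step for step. Both establish liveness of every $\rho_j$ from the token being accepted. Both split the base case into $\sigma_1 < t$, handled by a direct step from $(\delta^*(q_0, x[\sigma_1..t)), \before)$, and $\sigma_1 = t$, handled by the seed, whose gate is open because the first token traces an accepting path. Both split the inductive step into a direct step, with the re-entrancy argument against the rename, and the acceptance-gated seed.

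One blemish in your base case: your first justification, that $q' = q_0$ would force $\sigma_1 = t$, is false, since a re-entrant $q_0$ can be revisited mid-token. Drop it and keep only your second, re-entrancy-based justification, which is exactly the paper's.
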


\begin{proof}
Every $\rho_j$ is live: reachable through the actual token prefix, and co-accessible because its token ends at
some $e \geq t+j$ with $\delta^*(\rho_j, x[t+j \,..\, e)) = \delta^*(q_0, x[\sigma_j \,..\, e))$, and the right
side is accepting.

\emph{Base, $j = 1$.} If $\sigma_1 < t$, the state $p = \delta^*(q_0,\, x[\sigma_1 \,..\, t))$ is live and lies in
$C_0$ paired with $\before$; the direct step carries it to $(\rho_1, \before)$. The rename cannot interfere: $p$ has
consumed at least one byte, so $p = q_0$ only if a non-empty live path returns to $q_0$, which disables the rename. If
$\sigma_1 = t$, then $\rho_1 = \delta(q_0, w_0)$ and the seed fires, because $C_0$, which is all of the live states,
contains a live accepting state: the fixed input is completely tokenizable and non-empty, so its first token traces an
accepting path from $q_0$, making $q_0$ live with an accepting state reachable from it, and a reachable accepting
state is trivially co-accessible.

\emph{Step, $j$ to $j+1$.} If the byte at $t+j$ continues its token, $\sigma_{j+1} = \sigma_j$, and the direct step
carries $(\rho_j, \omega_j)$ to $(\delta(\rho_j, w_j), \omega_j)$; the rename does not overwrite the origin by the
same re-entrancy argument as in the base. If the byte at $t+j$ begins a token, the previous token ended at $t+j$,
so $\delta^*(q_0,\, x[\sigma_j \,..\, t+j))$ accepts; that state is $\rho_j$, in $C_j$ by hypothesis, so the seed
fires and emits exactly $(\delta(q_0, w_j),\, j) = (\rho_{j+1}, \omega_{j+1})$.
\end{proof}

The lemma is containment, not equality: the cloud may carry hypotheses no execution realizes. That is harmless in one
direction and load-bearing in the other: at a realized occurrence in a completely tokenizable input, surplus
hypotheses can only preserve the actual origin's unanimity or destroy unanimity; they cannot manufacture unanimity at
a false origin.

\begin{theorem}[Soundness]
\label{thm:soundness}
If $C_k \neq \varnothing$ and every pair of $C_k$ carries the same origin $o \neq \before$, then $(W, o)$ is a
certified split window.
\end{theorem}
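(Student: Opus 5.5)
The plan is to derive Theorem~\ref{thm:soundness} directly from Lemma~\ref{lem:representation} at $j = k$. I would unfold Definition~\ref{def:window}: fix an arbitrary completely tokenizable input $x$ containing $W$ at offset $t$. I must show that the token of the maximal-munch tokenization of $x$ containing the byte at $t+k-1$ begins at $t+o$. With the notation fixed before the lemma, that token begins at $\sigma_k$. Its recorded origin is $\omega_k = \sigma_k - t$, or $\before$ when $\sigma_k < t$.

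The key step is to apply Lemma~\ref{lem:representation} with $j = k$, which places $(\rho_k, \omega_k)$ in $C_k$. By hypothesis every pair of $C_k$ carries the origin $o$, so $\omega_k = o$. Since $o \neq \before$ and $o \in \{0,\dots,k-1\}$, the case $\sigma_k < t$ is excluded, because there $\omega_k$ would be $\before$. Hence $\sigma_k - t = o$, that is, $\sigma_k = t + o$, which is exactly the conclusion of Definition~\ref{def:window}. Since $x$ and $t$ were arbitrary, $(W,o)$ is certified.

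Two small points need care. First, the origin bookkeeping must match. An in-window origin $\omega = j$ records a token beginning at byte $t+j$, and this holds both for seeded pairs and for the rename from $q_0$. The lemma already fixes this, so I only need to cite it. Second, the hypothesis $C_k \neq \varnothing$ plays no role in the implication itself. Any input $x$ yields a member of $C_k$ by the lemma, so it simply records that the certificate is non-vacuous within the model. I would say explicitly that emptiness of $C_k$ would force there to be no completely tokenizable occurrence at all.

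I expect no real obstacle here, because the substance lies in Lemma~\ref{lem:representation}. The one step that deserves a sentence is ruling out $\before$: unanimity is stated over the origin component, and $o$ is required to be an in-window offset. I would also remark that the theorem inherits the flat, completely-tokenizable scope and the reading of $A$ through Lemma~\ref{lem:unroll}, which the representation lemma assumed.
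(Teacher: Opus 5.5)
Your proposal is correct and is essentially the paper's proof: instantiate Lemma~\ref{lem:representation} at $j = k$ to place $(\rho_k, \omega_k)$ in $C_k$, read off $\omega_k = o \neq \before$ from unanimity, and conclude $\sigma_k = t + o$ for an arbitrary completely tokenizable occurrence. Your side remark is also right: the implication does not use non-emptiness of $C_k$, because by the same lemma an empty $C_k$ admits no completely tokenizable occurrence, so the certificate would hold vacuously.
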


\begin{proof}
By Lemma~\ref{lem:representation}, $(\rho_k, \omega_k) \in C_k$, so $\omega_k = o$, so $\sigma_k = t + o$, a token
start of the final segmentation. The input was an arbitrary completely tokenizable one containing $W$ at $t$.
\end{proof}

\paragraph{Backup never appears.} Maximal-munch lookahead that a later rewind discards occupies states the lemma says
nothing about; the invariant tracks where the \emph{final} segmentation's tokens begin, not where the read head
wanders. A boundary that a rewind later exposes was seeded by the model step reading the first byte after the
accepting position justifying it. This holds when lookahead crosses the window's left edge, when several accepting
positions are passed, and across chains of consecutive rewinds; each is an instance of the step case.

\section{The finite quotient and the decision procedure}
\label{sec:quotient}

Acceptance gating alone does not terminate: over $\code{a}^+$ the automaton accepts after every byte and origins
accumulate without bound. The search therefore deduplicates on a quotient of the cloud: the set $B$ of states
carrying $\before$, and for each state the number of distinct in-window origins it carries, saturated at two.

\begin{invariant}[Single occupancy]
\label{inv:single}
In every reachable cloud, each surviving in-window origin occupies at most one state.
\end{invariant}

An origin enters the cloud at most once per rule application, into a single state, and the deterministic step moves
it to at most one successor, so it continues in one state or dies. At offset $j$ the acceptance seed and the
non-re-entrant-$q_0$ rename can both propose the fresh origin $j$; both target exactly $\delta(q_0, w_j)$, and set
semantics coalesces the identical pair, so single occupancy survives the collision. Origins may die, which is why
the invariant says at most one rather than exactly one. The pre-window origin is deliberately different: it may
occupy many states and is held apart as the Boolean support set $B$. An arbitrary cloud could violate the invariant,
placing one origin in two states, and there the saturated counts could not decide unanimity; no such cloud is
reachable, and the restriction is load-bearing for everything below. The key of a cloud is the pair $\kappa(C) =
(B,\, m)$ with $m(q) = \min(2,\, \text{number of in-window origins at } q)$.

\begin{lemma}[Congruence, depth-aligned] \label{lem:congruence} Let $C_u$ and $C_v$ be clouds reached by the search
after words $u$ and $v$. If $\kappa(C_u) = \kappa(C_v)$, then for every byte $b$ the successors $\mathrm{step}(C_u, b,
|u|)$ and $\mathrm{step}(C_v, b, |v|)$ are either both empty or both non-empty, and when non-empty, \[
\kappa(\mathrm{step}(C_u, b, |u|)) = \kappa(\mathrm{step}(C_v, b, |v|)), \] with the certification verdict agreeing on
both sides. For successor-key equality alone, it is enough that the inserted offsets be \emph{fresh}, exceeding every
in-window origin of their respective clouds; the certification verdict additionally needs legal depths, and the search
supplies both: it steps with the word lengths $|u|$ and $|v|$, and every origin in a length-$\ell$ cloud lies below
$\ell$. Freshness is not decorative: over $\{\code{a}^+, \code{b}\}$ the cloud after \code{a} carries origin $0$, and
stepping on \code{a} with offset $0$ merges the seed into that origin while offset $1$ creates a second one, so the
same key would have two different successors under arbitrary offsets. \end{lemma}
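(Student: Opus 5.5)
The plan is to show that every ingredient of the step, and of the verdict, can be read off the key $\kappa(C) = (B, m)$ together with the fresh offset. Invariant~\ref{inv:single} is what makes this possible. First I would describe $\mathrm{step}(C, b, j)$ component by component. The new $\before$-support is $B' = \{\delta(q,b) : q \in B,\ \delta(q,b) \text{ live}\}$, unless the rename applies. The rename applies only when $q_0$ is not re-entrant, and then a $\before$ pair at $q_0$ moves to $(\delta(q_0,b), j)$ instead. Either way $B'$ depends only on $B$, $b$, and the fixed automaton, not on depth.

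The in-window part has three pieces. Each surviving in-window origin sitting at $q$ moves to $\delta(q,b)$ if that state is live, and otherwise dies. In-window pairs at $q_0$ undergo the rename into origin $j$. The seed $(\delta(q_0,b), j)$ is added exactly when some pair of $C$ sits at an accepting state. That gate is decided by the set $B \cup \{q : m(q) \geq 1\}$, which the key determines.

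The central step is computing $m'$ from $m$. By single occupancy, each existing in-window origin lies in exactly one state. Distinct origins moving from distinct states $q_1 \neq q_2$ may land on the same target, so $m'(r)$ is obtained by summing over predecessors: $\min\bigl(2, \sum_{q:\delta(q,b)=r,\ r \text{ live}} m(q) + [\text{fresh } j \text{ lands at } r]\bigr)$. This is well defined on saturated counts because $\min(2,\cdot)$ commutes with summation of saturated summands. The origins at distinct states are distinct by single occupancy, and the counts at one state already count distinct origins.

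Freshness is essential here. Because $j$ exceeds every existing origin, the fresh origin $j$ never coincides with a moved origin. Its contribution is therefore exactly one extra unit at $\delta(q_0,b)$, whether it arises from the seed, the rename, or both, since set semantics coalesces them into one pair. It also has to be deducted correctly when the rename reroutes pairs at $q_0$. I would treat the rename carefully: renamed pairs at $q_0$ all become origin $j$, so their several origins collapse into one. I would check that $m(q_0)\ge 1$ or $q_0\in B$ contributes exactly one unit, which again is determined by the key. With both clouds stepped at their own lengths $|u|$ and $|v|$, freshness holds because origins lie below the depth. So $\kappa$ of the successors agree, and emptiness agrees because emptiness is $B' = \varnothing \wedge m' \equiv 0$.

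For the verdict, a non-empty cloud is certified iff $B = \varnothing$ and $\sum_q m(q) = 1$ exactly, which single occupancy again makes readable. The witnessing origin is the one at the unique state with $m = 1$, and it is in-window, i.e.\ legal, because it is below the depth. The verdict is therefore a function of the key. The main obstacle is the bookkeeping around the rename and seed collision together with the summation of saturated counts. I would first write the explicit successor formula for $(B', m')$, then verify each case (rename enabled or disabled, gate open or closed) against it, and finally cite the $\{\code{a}^+, \code{b}\}$ example to show that freshness cannot be dropped.
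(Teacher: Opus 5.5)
Your proposal is correct and follows essentially the paper's proof. Both compute $(B', m')$ from $(B, m)$ alone: the live image of $B$ minus $q_0$'s renamed contribution, the acceptance gate read off the occupied support, and per-target summation justified by single occupancy and by saturation commuting with sums. Both add one fresh unit for seed-or-rename, coalesced by set semantics and kept distinct from surviving origins by freshness, and both read the verdict as $B = \varnothing$ with total count one. The only difference is that you plan to handle renaming of in-window pairs sitting at $q_0$. The paper observes that when $q_0$ is non-re-entrant no live edge enters $q_0$, so no reachable cloud ever has an in-window origin there, and the rename decision is read from $q_0 \in B$ alone; your extra case is therefore vacuous but harmless.
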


\begin{proof}
The acceptance gate reads only which occupied states accept, and the occupied states are $B \cup \{q :
m(q) > 0\}$, a function of the key. The new $\before$-support is the live deterministic image of $B$, excluding
the contribution from $q_0$ when $q_0$ is non-re-entrant; that excluded contribution joins the fresh-origin
rename handled below. Over the single-token set $\{\code{a}\}$, stepping $C_0$ on \code{a} renames $q_0$'s
$\before$ origin to $0$, so $B' = \varnothing$, where the unexcluded image would wrongly retain
$\delta(q_0, \code{a})$. For the in-window counts at a successor state $q'$:
the direct step carries each origin from its unique state (Invariant~\ref{inv:single}), so origins arriving at
$q'$ from distinct predecessors are distinct, and the exact update at each target is: sum the old-origin
contributions arriving there, add one fresh origin if the seed or the non-re-entrant-$q_0$ rename fires toward it,
combining those two by logical or, since they denote the same fresh origin and the cloud is a set, and then
saturate at two. When $q_0$ is non-re-entrant, no live edge enters $q_0$, so after the first step no reachable
cloud contains $q_0$ with an in-window origin; the rename decision is therefore readable from $q_0 \in B$.
Whether the rename fires is determined by $q_0 \in B$, which is key-readable, and by re-entrancy,
which is
fixed automaton data rather than part of the key; whether the seed fires is determined by the key's accepting
support; that the fresh origin collides with no surviving origin is exactly the freshness hypothesis. Note that
saturation commutes with this update in the only direction needed:
a saturated state maps its surviving origins together, so a successor's saturated sum computed from saturated
counts equals the saturation of the exact sum; a state's count can also drop to zero outright when its transition
is missing, which the update handles as an empty contribution. Fresh origins are equivariant under renaming of origin
values,
and no rule ever reads an origin's value, only equality between origins, so keys computed on either side
agree. Emptiness is
key-readable, and certification of the full window is the key predicate $B = \varnothing$ with total count
exactly one, using Invariant~\ref{inv:single} to identify one total count with one origin in one state.
\end{proof}

One tempting simplification is false and worth flagging: it is not true that a state holding two origins can
never contribute to unanimity later. Over $\{\code{a}^+, \code{b}\}$, after \code{aa} the \code{a}-state
carries two in-window origins, as well as $\before$; on \code{b} all of them die while their pre-step acceptance
opens the gate for one fresh
\code{b}-origin, and the cloud is unanimous. What is true, and what the lemma uses, is that co-located origins
either follow the same live successor together or all die, and that origins which die leave no trace the key must
distinguish.

\begin{theorem}[Decision procedure]
\label{thm:decision}
Breadth-first search over windows, deduplicated on keys, decides whether the model certifies any window for a
given token set, and returns the minimum certified length; it does not enumerate every certified word. Each state
contributes a factor of $2 \times 3$ to the key space, so at most $6^{|Q^+|}$ keys are ever retained; each key
expands into at most $256$ successors, each computable from the key in $O(|Q^+|)$ time, for a worst case of
$O(256 \cdot |Q^+| \cdot 6^{|Q^+|})$ time and $O(|Q^+| \cdot 6^{|Q^+|})$ space.
\end{theorem}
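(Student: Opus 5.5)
The plan is to read the search as a reachability problem on the finite graph of keys and to transfer its answers back to clouds through Lemma~\ref{lem:congruence}. First, define the search precisely. It is BFS over words $w$, starting from $C_0$, and each cloud after word $u$ is stepped with offset $|u|$. A successor key is discarded if it is empty, since the empty cloud is absorbing and certifies nothing for any extension, or if it was seen earlier. The search stops at the first word whose key satisfies the certification predicate, namely $B=\varnothing$ with total in-window count exactly one. Invariant~\ref{inv:single} guarantees that every cloud the search actually holds is a reachable cloud, so the congruence lemma applies to it. The search only ever steps with legal depths, which gives freshness automatically: every origin in a length-$\ell$ cloud lies below $\ell$.

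Second, prove that the procedure is correct and returns the minimum length. The key claim is this: if $\kappa(C_u)=\kappa(C_v)$, then for every word $z$ the clouds $C_{uz}$ and $C_{vz}$ have equal keys, are simultaneously empty or non-empty, and agree on the certification verdict. This follows from Lemma~\ref{lem:congruence} by induction on $|z|$, stepping $uz$ at depth $|uz|$ and $vz$ at depth $|vz|$. Now suppose some window $W$ is certified by the model. Take a shortest such $W$ and the sequence of keys along its prefixes. If a prefix key repeats or was already visited by a shorter word, replace that prefix by the earlier word. By the claim, the resulting word is certified and no longer than $W$. Since BFS explores keys in order of first-reach depth, it therefore reaches some certifying key at depth at most $|W|$. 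Conversely, every certifying key the search reaches is realized by an actual word $u$ whose cloud certifies. Soundness of the answer relative to the model is therefore immediate, and the first certifying depth equals the minimum certified length. If BFS exhausts its frontier without certifying, then no word certifies: the key of every word's cloud is either empty somewhere along the way or was visited, and by the claim it carries the verdict of a visited key. Termination holds because the set of keys is finite.

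Third, the counting. A key records, for each live state, a bit saying whether the state lies in $B$ and a saturated count in $\{0,1,2\}$. That gives $6^{|Q^+|}$ keys. Each retained key is stored in $O(|Q^+|)$ space. The successor of a key on byte $b$ is computed from the key alone, as the proof of Lemma~\ref{lem:congruence} shows: take the image of $B$, sum the counts at their targets with saturation, and apply the seed/rename bit. This is a single pass over the states, hence $O(|Q^+|)$ time per byte, and with 256 bytes per key the bounds follow. The minimum-length word and its origin are recovered by keeping parent pointers. Since all that is required is to return the origin value, which is the depth at which the unique surviving origin was seeded, I would also store one representative word per key so that the origin value can be read off.

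The main obstacle is the path-replacement argument. Lemma~\ref{lem:congruence} is stated for one step between clouds reached at possibly different depths $|u|\ne|v|$. I must check that the induction keeps both sides legal and fresh at every subsequent step, so that verdicts transfer across different depths and not only across equal depths. I would handle this by carrying the invariant "reached by the search with legal depth" through the induction explicitly.
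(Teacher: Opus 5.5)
Your proposal is correct and follows essentially the paper's argument: you iterate Lemma~\ref{lem:congruence} over every common suffix because each matched step lands both clouds at fresh, legal depths again, breadth-first order keeps a shallowest representative per key so a certifying suffix from a discarded deeper occurrence yields an equal-or-shorter certificate, and the $6^{|Q^+|}$ bound comes from the per-state factor $2 \times 3$. One caution: you do not need an explicit representative word per key, since the theorem returns only the minimum length, and storing such words would take you outside the stated $O(|Q^+| \cdot 6^{|Q^+|})$ space bound unless they are shared through the parent pointers you already keep, which is the distinction the paper draws between the abstract key-to-key search and the probe's representative-based realization.
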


By Lemma~\ref{lem:congruence} the walk cannot miss the existence of a certifying continuation, nor change the minimum
certified length: after one matched step both successor clouds sit at fresh depths again, so the lemma inducts over
every common suffix; breadth-first order retains a shallowest representative of each key; and a certifying suffix from
a discarded deeper occurrence therefore yields an equal or shorter certificate from the retained representative.
Deduplication may still skip individual certified words; exhaustion proves the model admits none at any length. What
exhaustion does not give is semantic non-existence, because the model itself is conservative
(Section~\ref{sec:strictness}). The displayed bounds describe an abstract key-to-key search; the probe realizes the
procedure with one concrete representative cloud and word per key, using the quotient for deduplication, so its cost
additionally depends on representative size. It is budgeted, with three outcomes: certified, exhausted, and
inconclusive once the retained key count exceeds a fixed threshold of 200{,}000 keys; the evaluation below reports
zero inconclusive searches, and the retained key counts it reports are one indicator of the search footprint rather
than a complete cost model.

\section{Specializations}
\label{sec:specialization}

\subsection{Length one}
\label{sec:length-one}

At length one the model collapses to the published certificate, and the correspondence is exact at the level of the
\emph{shipped predicate}, the one that withholds vacuously certified bytes, rather than of the bare condition.
Throughout this section the token set is non-empty, read through its positive-width equivalent where a token matches the
empty string, and $q_0$ is live.

Call a byte $b$ \emph{useful} when $\delta(q_0, b)$ is defined and live. The predicate
\code{is\_split\_point($b$)} of~\cite{nidhogg2026splitpoints} holds exactly when $b$ is useful, no live state other
than $q_0$ has a $b$-transition into a live state, and $q_0$ is not re-entrant in $A^+$. Two quantifier readings
coincide here without loss: a transition target that is co-accessible makes its source co-accessible, so
``reachable state with a live-target $b$-transition'' and ``live state with a live-target $b$-transition'' name
the same states.

\begin{theorem}[Specialization]
\label{thm:specialization}
For every byte $b$, the model certifies $(b, 0)$, the only origin a length-one window admits, if and only if
\code{is\_split\_point($b$)} holds.
\end{theorem}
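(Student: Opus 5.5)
Both sides can be unfolded directly: I will compute the one-step cloud $C_1 = \mathrm{step}(C_0, b, 0)$ explicitly and read off the certification condition, namely $C_1 \neq \varnothing$ with every pair carrying origin $0$. Recall that $C_0$ is every live state paired with $\before$.

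\textbf{Computing $C_1$.} Three contributions are possible.
\begin{itemize}
\item Each live $q \neq q_0$ with $\delta(q,b)$ live contributes $(\delta(q,b), \before)$ by the direct step. No rename applies, because the rename acts only on pairs stepping from $q_0$.
\item The state $q_0$, which is live by the section's standing assumption, contributes one pair if $b$ is useful. That pair is $(\delta(q_0,b), 0)$ when $q_0$ is non-re-entrant, since the rename fires. It is $(\delta(q_0,b), \before)$ when $q_0$ is re-entrant, since the rename is disabled.
\item The acceptance-gated seed fires iff some live state accepts, which holds because the token set is non-empty and $q_0$ is live. It then inserts $(\delta(q_0,b), 0)$ exactly when $b$ is useful.
\end{itemize}
By set semantics the seed and the rename coalesce into one pair, as in Invariant~\ref{inv:single}. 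So $C_1$ is the set of $\before$-pairs from the first contribution, plus $(\delta(q_0,b),0)$ if $b$ is useful, plus $(\delta(q_0,b),\before)$ if $b$ is useful and $q_0$ is re-entrant.

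\textbf{($\Leftarrow$).} Assume \code{is\_split\_point($b$)}. Then no live $q \neq q_0$ has a live-target $b$-transition, so the first contribution is empty. Since $q_0$ is non-re-entrant, no $\before$-pair arises from $q_0$ either. Since $b$ is useful, $C_1 = \{(\delta(q_0,b),0)\}$, which is non-empty and unanimous at origin $0$.

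\textbf{($\Rightarrow$).} Assume $C_1$ is non-empty and unanimous at $0$. Every pair carrying origin $0$ has state $\delta(q_0,b)$ live, so non-emptiness forces $b$ to be useful. Unanimity forbids $\before$-pairs, which gives the other two conditions:
\begin{itemize}
\item no live $q \neq q_0$ has a live-target $b$-transition, which is the middle clause of the predicate;
\item $q_0$ is not re-entrant, since otherwise $(\delta(q_0,b),\before)$ would lie in $C_1$.
\end{itemize}

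\textbf{Main obstacle.} The delicate step is matching quantifiers exactly rather than the algebra. First, the predicate ranges over live states, and I will invoke the remark preceding the theorem to identify this with reachable states having live-target transitions. Second, I need the seed gate to be open at $j=0$ for every useful $b$. This is where the standing hypotheses (non-empty set, live $q_0$, positive-width reading via Lemma~\ref{lem:unroll}) must be cited carefully, including the possibility that the only accepting live states are reached from $q_0$ itself. Third, in the re-entrant case the $\before$-pair from $q_0$ must genuinely survive, because the rename is disabled; I will check this against the paper's precise description of the rename.
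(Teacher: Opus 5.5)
Your proposal is correct and follows the paper's proof essentially step for step. You compute $C_1$ from the three sources: the open seed gate at offset $0$, the direct step on $(q_0,\before)$ with the rename split by re-entrancy, and the $\before$-pairs from the other live states with live-target $b$-transitions; then you read off both directions. Your derivation of usefulness in ($\Rightarrow$), from the existence of an origin-$0$ pair, is a slightly more direct phrasing of the paper's case argument, with the same content.
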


\begin{proof}
Compute $C_1$ explicitly. $C_0$ is every live state paired with $\before$, and it contains a live accepting
state, since $q_0$ is live and an accepting state reachable from it is trivially co-accessible, so the seed's
acceptance gate is open at offset 0. Reading $b$ therefore yields
exactly:

\begin{itemize}
\item
from the seed, the pair $(\delta(q_0, b),\, 0)$, present if and only if $b$ is useful;
\item
from the direct step on $(q_0, \before)$, present if and only if $b$ is useful: the pair
$(\delta(q_0, b),\, 0)$ when $q_0$ is not re-entrant, by the rename, and $(\delta(q_0, b),\, \before)$ when it is;
\item
from the direct step on every other live state $q$ that has a live-target $b$-transition, the
pair $(\delta(q, b),\, \before)$.
\end{itemize}

If \code{is\_split\_point($b$)} holds, only the first two items contribute, the rename fires, and both surviving
contributions are the single pair $(\delta(q_0, b),\, 0)$, so $C_1$ is the singleton and the model certifies
$(b, 0)$. Conversely, suppose the model certifies $(b, 0)$, so $C_1$ is non-empty and unanimous at origin $0$. If
$b$ were not useful, the first two items would be absent and every pair of $C_1$ would carry $\before$ from the
third, or $C_1$ would be empty; either way certification fails, so $b$ is useful. If some live $q \neq q_0$ had a
live-target $b$-transition, the third item would place a $\before$-pair in $C_1$, breaking unanimity; so none
does. If $q_0$ were re-entrant, the second item would place $(\delta(q_0, b),\, \before)$ in $C_1$ beside the
seed's origin-$0$ pair, breaking unanimity; so $q_0$ is not re-entrant. These are exactly the predicate's three
clauses.
\end{proof}

The vacuous case lands on the predicate's side of the published condition-versus-predicate distinction by
construction: a byte no live state consumes empties the cloud, which the model refuses, exactly as the shipped
predicate withholds a byte the bare condition certifies vacuously. At length one the model is therefore exactly
the shipped predicate; the multi-byte construction is its conservative continuation. The artifact additionally
asserts the agreement byte for byte on every
grammar of the evaluation before each search; after Theorem~\ref{thm:specialization} that check verifies the
implementation rather than the claim.

\subsection{The prefix-code case}
\label{sec:codes}

The other specialization is by token set rather than by length. A \emph{prefix code} $X \subseteq \Sigma^+$ has no
word that is a proper prefix of another, and read as a token set it is the setting where the certificate needs
neither maximal-munch priority nor the cloud's origin bookkeeping: at any boundary at most one codeword begins, so
the completely tokenizable inputs are exactly $X^*$, each with the one factorization the scan computes. Code theory
has two notions of a boundary forced by bounded context. A \emph{synchronizing pair} of
$X$~\cite{berstel2010codes} is a pair $(x, y) \in X^* \times X^*$ such that $u x y v \in X^*$ implies
$u x \in X^*$ and $y v \in X^*$ for all $u, v \in \Sigma^*$; and, in the factor formulation of
Fici, Romana, Sciortino and Urbina~\cite{fici2025morphisms}, a split $w = w_1 w_2$ of a factor $w$ of $X^*$ is
\emph{synchronizing} when every occurrence of $w$ in a word of $X^*$ has a factorization boundary between $w_1$ and
$w_2$. Neither is a certified window as it stands, since neither says which codeword covers the window's final byte,
and the certificate says nothing about the halves being in $X^*$; the exact relation is the following.

\begin{proposition}[Prefix codes]
\label{prop:codes}
Let $X$ be a prefix code read as a token set, and let $W$ be a window occurring in some word of $X^*$, with
origin $o$, split as $W = W_{<o}\, W_{\geq o}$ at the origin.
\begin{enumerate}
\item $(W, o)$ is certified if and only if the split is synchronizing and $W_{\geq o}$ is a prefix of some codeword.
\item If $(x, y)$ is a synchronizing pair with $y$ nonempty and $y = c_1 \cdots c_m$ is its factorization into
codewords, then
$(xy,\ |x| + |c_1 \cdots c_{m-1}|)$ is certified. Conversely, if $(W, o)$ is certified with $W_{<o} \in X^*$ and
$W_{\geq o} \in X$, then $(W_{<o}, W_{\geq o})$ is a synchronizing pair.
\end{enumerate}
\end{proposition}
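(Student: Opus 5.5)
The whole argument rests on one semantic fact about prefix codes, established first. The completely tokenizable inputs are exactly $X^*$, and each such input is segmented by the scan into its unique factorization. Maximal munch cannot deviate from that factorization: at a boundary, at most one codeword is a prefix of the remaining input, since two distinct codewords that are both prefixes of the same string are prefix-comparable. So the scan emits that codeword, and induction gives the unique factorization. Consequently, for an occurrence of $W$ at offset $t$ in $z \in X^*$, the covering token of byte $t+k-1$ is the codeword $c$ of the factorization containing that byte. Certification of $(W,o)$ is then a purely combinatorial statement about factorizations of words of $X^*$. Certification quantifies over inputs, not the model, so Section~\ref{sec:model} is not needed here.

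For part 1, suppose first that $(W,o)$ is certified and $W$ occurs in some word of $X^*$. Take any occurrence $z=uWv\in X^*$. The covering codeword starts at $t+o$, so a factorization boundary lies between $W_{<o}$ and $W_{\geq o}$; hence the split is synchronizing. At the fixed witnessed occurrence, the covering codeword begins at $t+o$ and contains byte $t+k-1$. Therefore $W_{\geq o}$ is a prefix of it.

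Conversely, assume the split is synchronizing and $W_{\geq o}$ is a prefix of some codeword $c$. For any occurrence, there is a boundary at $t+o$, and by uniqueness of the factorization a single codeword $c'$ begins there. The main obstacle is showing that $c'$ covers all of $W_{\geq o}$, that is, that no boundary falls strictly inside $W_{\geq o}$. Suppose $c'$ were shorter than $|W_{\geq o}|$. Then $c'$ would be a proper prefix of $W_{\geq o}$, hence a proper prefix of $c$, contradicting the prefix-code property. So $|c'|\ge k-o$, and byte $t+k-1$ lies in the token beginning at $t+o$.

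For part 2, forward direction: given a synchronizing pair $(x,y)$ with $y=c_1\cdots c_m$ and an occurrence $z=u\,xy\,v\in X^*$, the synchronizing property gives $yv\in X^*$. By uniqueness of factorization, the factorization of $yv$ is $c_1\cdots c_m$ followed by a factorization of $v$; uniqueness applies because a prefix code is uniquely decipherable, and $c_1\cdots c_m$ is the unique parse of the prefix $y$. Also $ux\in X^*$, so the factorization of $z$ is the concatenation of those of $ux$ and $yv$. Thus $c_m$ is a factor of $z$'s factorization covering the last byte of $xy$, and it begins at offset $|x|+|c_1\cdots c_{m-1}|$, as claimed.

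Converse of part 2: $W_{<o}\in X^*$ and $W_{\ge o}\in X$. Take $uWv\in X^*$. Certification places a boundary at $t+o$, and the token starting there covers byte $t+k-1$. By the prefix argument above, that token has length at least $k-o$. Since $W_{\geq o}$ is itself a codeword and the token and $W_{\geq o}$ are prefix-comparable, the token equals $W_{\geq o}$, so there is also a boundary at $t+k$. The prefix $uW_{<o}$ ends at a boundary, so $uW_{<o}\in X^*$. The suffix $W_{\geq o}v$ starts at a boundary, so $W_{\geq o}v\in X^*$. This is the synchronizing-pair condition for $(W_{<o},W_{\geq o})$.
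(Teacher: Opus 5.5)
Your proof is correct and follows the paper's argument step for step. It makes the same identification of complete tokenizations with the unique left-to-right factorizations in $X^*$, uses the synchronizing boundary and the proper-prefix contradiction for part 1, and splits at the boundary $|u|+o$ in both directions of part 2. Your extra step showing that the covering token equals $W_{\geq o}$ is valid but not needed.

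One justification should be restated. In the forward direction of part 2, the fact that the factorization of $yv$ begins with $c_1 \cdots c_m$ (so that $v \in X^*$) does not follow from unique decipherability. Over the suffix code $\{\code{a}, \code{ab}\}$, $y = \code{a}$ and $yv = \code{ab}$ both lie in $X^*$ while $v = \code{b}$ does not. It follows instead from the prefix-forcing fact of your first paragraph, which is exactly how the paper argues it.
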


\begin{proof}
Throughout, a completely tokenizable input is a word of $X^*$ and its token boundaries are the boundaries of its one
factorization, since at a boundary the codeword the text spells is the only codeword that begins there.

(1) If $(W, o)$ is certified, every occurrence has a boundary at offset $o$, so the split is synchronizing, and the
codeword covering the final byte begins at $o$, so $W_{\geq o}$ lies within that codeword from its start and is a
prefix of it. Conversely, take an occurrence of $W$ at offset $t$ in a word of $X^*$; the split being synchronizing
puts a boundary at $t + o$, where some codeword $c'$ begins. Let $c$ be a codeword with prefix $W_{\geq o}$. Were
$c'$ to end inside $W_{\geq o}$, it would be a proper prefix of $c$, which a prefix code forbids; so $c'$ extends
through $W_{\geq o}$, and the token covering the final byte begins at $t + o$.

(2) Let $(x, y)$ be a synchronizing pair and take an occurrence $u x y v \in X^*$. Then $u x \in X^*$ places a
boundary before $y$, and $y v \in X^*$ with $y \in X^*$ factors from that boundary through $y$'s own codewords, since
at each of their starts the codeword the text spells is the one $y$ contains; so $c_m$ begins at
$|u x| + |c_1 \cdots c_{m-1}|$ and covers the final byte of $xy$. Conversely, if $(W, o)$ is certified with
$W_{<o} \in X^*$ and $W_{\geq o} \in X$, then every $u W v \in X^*$ has a boundary at $|u| + o$, so $u W_{<o}$ is a
concatenation of codewords and so is $W_{\geq o} v$, which is the synchronizing-pair condition.
\end{proof}

The proposition places the classical case: over a prefix code the certified windows that occur are the synchronizing
splits whose right half sits inside one codeword; a synchronizing pair of Berstel, Perrin and Reutenauer with a nonempty
right component is a certified window on its concatenation, the origin at the start of the right component's last
codeword; and a certified window whose left half is in $X^*$ and whose right half is a codeword is a synchronizing pair.
The occurrence hypothesis sets the vacuous case of Definition~\ref{def:window} aside: a window no word of $X^*$ contains
is certified whatever its right half spells. The named token sets of Section~\ref{sec:evaluation} fall outside the code
case: an identifier is a proper prefix of a longer identifier, so no C-like token set is a prefix code, and the JSON
number \code{1} is a proper prefix of \code{11}, so JSON is not one either; over $\{\code{a}, \code{ab}, \code{b}\}$ the
word \code{ab} has two factorizations and it is maximal munch, not the code, that picks one. The random sample of
Section~\ref{sec:evaluation} is drawn without that restriction and contains prefix codes, which the proposition covers.
Beyond the code case the certificate is defined through the scan, the origin bookkeeping of Section~\ref{sec:model} is
what decides it, and that reach is what the evaluation measures.

\section{Strictness of the model}
\label{sec:strictness}

The model refuses windows a greedy scanner would allow. The conservatism is deliberate: the seed rule uses
acceptance as the only license a token needs to begin, which over-approximates greedy behaviour by design, and
this section exhibits one source of conservatism and shows that nonvacuous strictness begins at length two.
Both witnesses below are asserted artifact
rows: the assertion checks that the model refuses the window \emph{and} that an exhaustive oracle over every
completely tokenizable input up to a length bound finds, at every occurrence, the token covering the window's
final byte beginning at the claimed origin, with the occurrence count pinned exactly. The covering-token check is
the property of Definition~\ref{def:window}; checking merely that some token begins at the origin passes false
covering-origin witnesses, such as $\{\code{a}, \code{abx}, \code{b}, \code{x}\}$ at \code{ab}, where the
input \code{ab} tokenizes as \code{a}$\mid$\code{b} while the fixed boundary at the occurrence remains a safe
cut.

\emph{Witness one.} Over $\{\code{a}, \code{ab}, \code{b}\}$ the window \code{ab} is semantically certified
at origin 0, and universally so, not merely to the oracle's bound: the byte \code{a} occurs only as a token's
first byte, so a token begins at every occurrence offset $t$, and maximal munch there prefers \code{ab} over
\code{a}, so the covering token of the final byte begins at $t$. The oracle confirms the argument over all inputs
to length 14, with 98{,}305 occurrences and zero violations. The model
refuses it: after \code{a}, the cloud is the single pair carrying origin 0, but \code{a} is a token, so reading
\code{b} seeds a competing trajectory at origin 1, the segmentation \code{a}$\mid$\code{b} that greedy scanning
never chooses, and unanimity is lost.

\emph{Witness two.} Over $\{\code{ab}, \code{abc}, \code{c}\}$ the window \code{abc} is semantically
certified at origin 0, again universally: \code{a} occurs only token-initially, so a token begins at $t$, and
maximal munch prefers \code{abc} over \code{ab} there. The oracle confirms it over all inputs to length 12 with
932 occurrences and zero violations. Both witnesses instantiate the same $(u,\, uv,\, v)$ shape; they differ in
prefix depth rather than mechanism, the competing origin arriving one byte in for the first and two bytes in for
the second, where the accepting proper prefix \code{ab} seeds \code{c}, the segmentation
\code{ab}$\mid$\code{c} that maximal munch forgoes.

Both witnesses have length at least two, and that is not an accident of the examples.

\begin{corollary}[Non-vacuous strictness begins at length two]
\label{cor:strictness}
Let $b$ be a byte occurring in some completely tokenizable input. If the model refuses $(b, 0)$, then $(b, 0)$ is
not a certified split window. The occurrence hypothesis is necessary: a byte no completely tokenizable input
contains satisfies Definition~\ref{def:window} vacuously while the model refuses its emptied cloud, and such
vacuous disagreements are not strictness.
\end{corollary}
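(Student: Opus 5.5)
By Theorem~\ref{thm:specialization}, the model refuses $(b,0)$ exactly when \code{is\_split\_point($b$)} fails. The plan is therefore to take each way the predicate can fail and, in each case, produce a completely tokenizable input containing $b$ whose token covering that occurrence of $b$ does not begin at the occurrence. At length one the covering token begins at offset $t$ exactly when a token boundary sits at $t$. So every case reduces to the same task: exhibit an occurrence of $b$ strictly inside a token of some completely tokenizable input. The occurrence hypothesis supplies a completely tokenizable input $x$ with $x_t = b$.

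\emph{Case 1: $b$ is not useful.} In $x$, the token covering $t$ is some word $c$ accepted from $q_0$. Suppose $b$ began that token. Then $\delta(q_0,b)$ would be defined, and it would be live, since it is reachable and co-accessible through $c$. That contradicts uselessness, so $b$ lies strictly inside its token in $x$, and $x$ itself refutes the certificate. The occurrence hypothesis is used precisely here; without it this case is the vacuous one the corollary excludes.

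\emph{Case 2: some live $q \neq q_0$ has a live-target $b$-transition.} We write $q = \delta^*(q_0,u)$ with $u$ nonempty, and complete through $b$ to an accepted word $c = u\,b\,v$. We want an input in which $c$ is actually emitted as a maximal-munch token. The natural candidate is $c$ followed by a padding that blocks any longer munch, but greedy extension and the rewind semantics make this delicate. This is the step I expect to be the main obstacle.

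My first attempt is the following. Among completely tokenizable inputs, I look for one in which the token covering some occurrence of $b$ does not start at it. Suppose instead that every occurrence of $b$ in every completely tokenizable input begins a token. I would then try to contradict this using the word $c$ together with the fact that $x$ is completely tokenizable. If $c$ alone is not completely tokenizable, I would fall back on the paper's implicit convention that the predicate's clause concerns states that occur at some scan. That may force a sharper reading, namely that a live state is one traversed in some completely tokenizable scan. I would check whether the predecessor paper's necessity theorem~\cite{nidhogg2026splitpoints} already supplies exactly this witness. My intended move is to invoke that theorem's necessity direction rather than reprove it: the bare condition is necessary for non-vacuous certification, and the predicate differs from it only on vacuous bytes, which the occurrence hypothesis rules out.

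\emph{Case 3: $q_0$ is re-entrant.} A live path returns to $q_0$ along a nonempty word. This case is handled the same way, with $q_0$ reached mid-token playing the role of $q$. Here too I would cite the predecessor's necessity result.

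The final step assembles the cases: refusal implies one of the three failures, and each failure yields a non-vacuous counterexample occurrence. The remark on necessity of the occurrence hypothesis follows from Case 1's dependence on $x$. It is witnessed by any byte no token contains: its cloud empties, and Definition~\ref{def:window} holds vacuously.
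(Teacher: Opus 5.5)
Your proposal is correct and follows the paper's route: Theorem~\ref{thm:specialization} turns refusal into failure of \code{is\_split\_point($b$)}, and the predecessor's necessity theorem supplies a completely tokenizable input with $b$ strictly inside a token. Your direct Case~1 is handled in the paper by folding it into that same citation: an occurring non-useful byte must be consumed by some live state other than $q_0$, so the bare condition fails. The step you flag as the main obstacle is actually easy: any accepted word, such as your $c = u\,b\,v$, or $w\,b\,v$ with $w$ nonempty and $\delta^*(q_0, w) = q_0$ when $q_0$ is re-entrant and $b$ is useful, is by itself a completely tokenizable input, because the scan runs to the end of $c$ and emits it as one token; so Cases~2 and~3 need no padding and do not even need the citation.
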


\begin{proof} Since $b$ occurs, the final segmentation's covering token consumes it at that occurrence, a transition
from a live state into a live state, so if only $q_0$ has a live-target $b$-transition and $q_0$ is not re-entrant,
the predicate reports $b$ and, by Theorem~\ref{thm:specialization}, the model certifies $(b, 0)$, contrary to
assumption. So the exact condition of~\cite{nidhogg2026splitpoints} fails for $b$, and its necessity theorem
constructs a completely tokenizable input placing an occurrence of $b$ strictly inside a token; at that occurrence the
token containing $b$ begins before it, so $(b, 0)$ is not certified. \end{proof}

Non-vacuous conservatism is therefore a strictly multi-byte phenomenon: at length one the model is exact for
occurring bytes, and the shortest strict refusals have length two, a bound witness one attains. The
$\{\code{a}, \code{abx}, \code{b}, \code{x}\}$ family of Section~\ref{sec:preliminaries} plays the opposite
role in the artifact, a negative control for the covering-origin property itself: cutting at its fixed boundary is
safe, since \code{a} occurs only token-initially, the covering origin genuinely varies, and the artifact pins
those violations exactly; an oracle that misses them has lost its teeth.
Negatives in every table of this paper are claims about the model, never about the language.

\section{From a boundary to a parallel cut}
\label{sec:cut}

At every occurrence at offset $t$ in a completely tokenizable input, a certified window $(W, o)$ yields a true boundary
$t + o$ of the final segmentation. Turning a boundary into a parallel cut is the predecessor's territory: its
prefix-stability result is what licenses a worker to scan from a known boundary and agree with the serial stream around
the cut~\cite{nidhogg2026splitpoints}. The division of labour is exact: this paper establishes that $t + o$ is a
boundary; the predecessor establishes what a scan starting at a boundary preserves. The v1.3.3 artifact evaluated here
plans with single-byte certificates only; release 1.4.0 added the window decision as a public contract, which the
revised sweep cross-checks against its model, and a window-planning sibling, which lies outside this paper's evaluation
and is not a claim of this paper.

\section{Evaluation}
\label{sec:evaluation}

The artifact runs the evaluation in the default test target and CI; the figures below are asserted rather
than merely printed, so a drifted number fails the test suite.

The random sweep is the output of \code{tools/probes/src/window\_gate.cpp} at munch commit
\code{707a79941472885a260c0bc96e615dd2fb5e99d2}, the commit at which the positive-width equivalent entered the library;
the v1.3.3 release the rest of this evaluation describes excludes the nullable sets and asserts the earlier counts. Over
400 random token sets on a three-symbol alphabet, generated by the probe itself with a pinned seed and draw order, 266
are nullable and are decided through their positive-width equivalent, exactly as the artifact compiles them; 63 certify
at least one byte exactly. Of the 337 that certify no byte, \textbf{326 gain a certified window under the model, and 322
of those are witnessed}: for each, the bounded search finds a completely tokenizable input containing a certified
window, with the covering token beginning at the reported origin, verified as each input is constructed, with the 322
aggregate asserted; 4 model-positive grammars have no occurrence within the bounded witness search and are reported as
unresolved, never as rescues; 11 exhaust the quotient with no window under the model, and none are inconclusive.
Occurrence is a property of the concrete word rather than its quotient key, so the witness search continues past the
shortest certified length instead of stopping at the first certifying word. Separate rewind-stress rows exercised
1{,}079{,}392 generated executions that scanned through the window and contained at least one rewind, with zero
disagreements against the shipped scanner; 418{,}466 of those executions tokenize their whole input completely and the
remaining 660{,}926 have malformed suffixes past the window, both counts asserted. This is an implementation stress
check: the generated inputs were required to scan through the window, not to tokenize completely. The random sweep
additionally checked every certified two-byte window over the probe's generated contexts. The length-one case reproduces
the published certificate on all 400 grammars, the 266 nullable ones included.

Named token sets: all six exact-empty rows of the predecessor's applicability table, five C-like variants and JSON, gain
witnessed windows, and one new cumulative C-like variant joins them as a seventh positive row. Table~\ref{tab:windows}
shows the concrete windows, with the retained key counts, one indicator of the search footprint, in place of the
$6^{|Q^+|}$ bound. The example windows show the recovery anchors: the string and line-comment rows resynchronize at a
newline followed by a byte that must begin a token, and the block-comment rows at the byte pair \code{*/} followed by
whitespace, in comment context the closer, with the origin immediately after the pair. The certificate is
occurrence-universal, so it also covers occurrences where \code{*/} reads as two operator tokens; the pinned witnesses
exercise exactly that reading. The conventional row certifies at length two: its whitespace runs include the newline, so
\code{!} must begin a token there as well. The JSON row uses the RFC 8259~\cite{bray2017json} lexical forms over bytes
and assumes UTF-8 validity; it is not a conforming JSON processor. How often such windows occur in real corpora is an
empirical question for the measurement campaign, and no frequency claim is made here. The run $\code{a}^+$ is included
as the negative row, and $\code{a}^*$ would be decided as the same automaton: every byte continues a run as readily as
it begins one, the model certifies no window, since absent-byte windows certify only vacuously under
Definition~\ref{def:window}, and the search exhausts its quotient, which is precisely the shape of a model-negative.

\begin{table}
\centering
\caption{Certified windows for the named token sets, from the gate's asserted rows: the shortest
model-certified length, one example window with its origin, and the quotient keys the search retained before
shortest-window stopping. Every positive row's displayed window carries an asserted occurrence witness; among
the positive rows the cumulative one is new to this study, and every other positive row is an exact-empty row of
the predecessor's table.}
\label{tab:windows}
\setlength{\tabcolsep}{4pt}
\begin{tabular}{lccc}
\toprule
Token set & \shortstack{Shortest\\model-certified $k$} & Example window at origin & Retained keys \\
\midrule
C-like, string literals & 2 & \code{\char`\\n!} at 1 & 24 \\
C-like, line comments & 2 & \code{\char`\\n!} at 1 & 18 \\
C-like, block comments & 4 & \code{\char`\\t*/\char`\\t} at 3 & 53 \\
C-like, conventional & 2 & \code{\char`\\n!} at 1 & 27 \\
split-friendly, block comments & 4 & \code{\char`\\n*/\char`\\t} at 3 & 188 \\
C-like, cumulative (new here) & 4 & \code{\char`\\n*/\char`\\t} at 3 & 189 \\
JSON, RFC 8259 & 2 & \code{\char`\\t"} at 1 & 69 \\
$\code{a}^+$ (negative row) & none & search exhausted & 3 \\
\bottomrule
\end{tabular}
\end{table}

\section{Related work}
\label{sec:related}

The window generalizes the certified split point of~\cite{nidhogg2026splitpoints}, and inherits its relation to the
parallel lexing families: composition carries every state and pays for it~\cite{mytkowicz2014dpfsm}; speculation
predicts an entry state, validates, and re-executes on a miss~\cite{prabhu2010speculative}; bounded alternative-state
scans disambiguate without single-state speculation~\cite{barenghi2015parallel}; prescanning pays a pass over the
input~\cite{li2021plex}; the window, like the byte, is derived from the grammar before any input exists. A close
compiler-style neighbor in the maximal-munch setting is the streaming analysis of Li, Yang, and
Mamouras~\cite{li2026streaming}, which statically computes a grammar's maximum token-neighbor distance and uses the
resulting bounded lookahead to emit a sequential maximal-munch stream without backtracking; their scan advances from a
known token boundary, so the window decides maximality rather than recovering the origin of the token covering an
arbitrary occurrence, and parallelization is left there as future work. On the formal side of the same setting, ZipLex
formally verifies linear-time invertible maximal-munch lexing, with an abstraction capturing the separability of tokens
in a sequence~\cite{chassot2026ziplex}, and Li and Mamouras formalize the uniform tokenization problem and give
$O(mn)$-time algorithms, linear in text length $n$ for a fixed grammar of size $m$, precomputing what each suffix admits
in a right-to-left pass before tokenizing from the input's start~\cite{li2025uniform}; neither line of work derives an
occurrence-universal raw-window certificate or recovers the covering token's origin from an arbitrary occurrence without
scanning from a known boundary. Lester's boxing check anticipates the flavor at a known join: every lexer state possible
after the first analyzed string must either already emit a token, making the following character irrelevant to that
string's lexing, or emit the same token immediately on every character that may begin the second, so the join is a
lexeme boundary and concatenation preserves token boundaries, stated compactly in the position paper and in full in the
journal treatment~\cite{lester2013boxing, lester2016flow}; the check is per-join over two analyzed fragments rather than
a certificate over every occurrence of a grammar-derived word, and it does not recover a covering origin. Recent
LLM-tokenizer work addresses input-specific seams instead: LoPT validates position-aligned tokenizations of overlapping
chunks and adjusts chunk length where needed~\cite{shao2026lopt}, and for BPE, recent work bounds the streaming delay of
ordered merge rules from a known beginning~\cite{mamouras2026bpedelay} or maintains the tokenization incrementally over
every prefix~\cite{jiang2026incremental}; Hayase, Liu, Smith, and Oh enumerate, for a byte prefix of BPE-tokenized text,
the tokenizations whose last token straddles the prefix end, a valid covering tree over one concrete
input~\cite{hayase2025bytesampler}. The certificate here is grammar-derived and occurrence-universal rather than
input-relative and enumerative, and it fixes one origin for every occurrence. The parsing side of that pipeline is
active again: cyclic operator precedence grammars~\cite{chiari2025cyclic} admit parallel-suitable chunking of flat
unbounded terminal-string substructures, with precedence relations defined between adjacent terminals; in a conventional
pipeline those terminals are lexer output, so that application presupposes tokenization, though the formalism itself is
not restricted to pre-tokenized input. A classical antecedent of window-determines-state is the definite automaton,
studied in depth by Perles, Rabin, and Shamir~\cite{perles1963definite}; its operational form for a fixed $k$ is
$k$-locality: any $k$ consecutive symbols force a unique state, every word of length $k$ being
synchronizing~\cite{holub2009parallel}. Both quantify uniformly over all windows and speak of states; the certificate
here is per-window, speaks of token boundaries under maximal munch, and recovers an origin: raw-state synchronization
alone does not identify the start of the covering maximal-munch token. The classical special case of boundary recovery
is code synchronization, and Proposition~\ref{prop:codes} states the relation exactly for prefix codes, where
left-to-right tokenization realizes the code factorization: the certified windows that occur are the synchronizing
splits whose right half sits inside one codeword, and a synchronizing pair~\cite{berstel2010codes} with a nonempty right
component yields, on its concatenation, a certified window whose origin is the start of the right component's last
codeword. An early member of that family is the comma-free code of Golomb, Gordon and Welch~\cite{golomb1958commafree},
a block code no codeword of which occurs across the boundary of two adjacent codewords, so that every codeword is a
certified window with origin $0$ for the code's own factorization. Finite synchronization
delay~\cite{restivo1975synchronization} bounds how many codewords a synchronizing pair needs, hence a byte bound for a
finite code; the $ww$-style resumption argument survives into the partial-DFA treatment~\cite{berlinkov2021partial}. The
explicit modern bounded-window form of that special case is the synchronizing morphism: a window of bounded length
suffices to detect boundaries between codewords~\cite{fici2025morphisms}, in a morphic code-factorization setting rather
than among competing prioritized token languages under maximal munch. Uniquely decipherable codes may share prefixes;
what they guarantee is a unique factorization, with no maximal-munch priority resolving overlaps between competing token
languages, and that difference is where the origin machinery here earns its existence. The relationship to reset words
is one-way and stops at length one: a useful certified byte induces a reset-like action on the partial live automaton
with domain $\{q_0\}$, under the stated re-entrancy qualification, a correspondence that fails under the classical
complete-DFA reading~\cite{volkov2008synchronizing}. It does not extend: a certified window need not be a reset word of
the token DFA at all, since over $\{\code{a}, \code{b}\}$ the window \code{ab} certifies at origin 1 while the action of
\code{ab} on the partial automaton is empty, and a rank-one letter need not certify, since over the token language
$\code{b}^*\code{a}$ the letter \code{b} can act with rank one while $q_0$ is re-entrant and \code{b} occurs inside
tokens. Certified windows synchronize token-origin information in an enriched cloud model; they need not synchronize the
raw token DFA. The complexity of the neighborhood is known: checking careful synchronizability of a partial automaton,
and finding a shortest carefully synchronizing word, are PSPACE-complete already over two-letter
alphabets~\cite{martyugin2010careful}, careful meaning the word stays defined from every state and maps all states to
one; that is context, not a bound, and no hardness result is claimed for the window problem here, whose per-grammar
retained-key counts, one footprint indicator rather than a cost model, stayed far below the worst case throughout.

Two practices are the practical counterparts. Compiler panic-mode recovery discards input to a recovery
set~\cite{aho2006compilers}, which may itself be grammar-derived; the distinction is post-error parser recovery
against a pre-input lexical boundary guarantee. Incremental lexing in the style of Wagner and Graham restarts from
per-token scanner-state snapshots in a versioned token stream, with dynamically maintained lookahead dependencies,
exact for the text they were cached against~\cite{wagner1997lexing}; a certified window is grammar-universal instead,
valid in every completely tokenizable context and known before any input exists, at the price of existing only where
the token set admits one.

Very recent concurrent work establishes nearby but different local guarantees. TokTier certifies input-relative splice
junctions for selected frozen pre-tokenizer and BPE pipelines: it matches cached against fresh runs, uses
family-specific synchronizing character-class transitions proved to reset the pre-tokenizer under every left context,
and proves exact recombination of its overlapping windows~\cite{zhang2026toktier}. ReTokSync monitors the
receiver-view tokenization of one concrete generated stream and triggers a corrective reset when ambiguity
occurs~\cite{wang2026retoksync}. Borsotti, Crespi Reghizzi, and Pradella define precedence relations over synthesized
attributes for a deterministic parser using one-symbol left and right neighborhoods~\cite{borsotti2026attribute}. None
decides occurrence-universal bounded words for competing prioritized token languages under maximal munch, nor recovers
the origin of the token covering an arbitrary cut.

Two formal treatments of tokenization itself address a different question. Kaplan models a language's tokenizing
conventions as a finite-state transduction and emits output at input-relative pinch-points, positions where all live
analysis paths converge to a single transducer state, so settled prefixes stream out as a particular text is
processed~\cite{kaplan2005tokenizing}. Cognetta and Okazaki encode the tokenizations of a regular language as
finite-state transduction and represent canonical MaxMatch and BPE subword tokenizers as
transducers~\cite{cognetta2025fst}. Neither derives occurrence-universal windows from the grammar alone, and neither
recovers the origin of the covering maximal-munch token at an arbitrary occurrence.

Symbolic dynamics comes closest in shape. A resolving block is a block of a factor subshift all of whose admissible
preimages agree at a selected coordinate~\cite{adler1983sliding, marcus1985sofic}: an occurrence-universal lift of a
local observation to hidden state, introduced as a means of resetting an encoding automaton in sliding-block code
construction. A certified window shares that shape, reading the window as the observable block and the covering
token's start as the hidden coordinate; the classical theory, however, is stated for factor maps of subshifts, without
token priorities, maximal munch, backup, or any tokenization semantics, and it neither defines nor decides the lexical
instantiation.

Across these areas, definite and local automata, synchronizing words and codes under both the complete and the partial
reading, resolving blocks, careful synchronization, streaming, verified, and uniform maximal-munch tokenization,
incremental relexing, the parallel lexing families, and the parallel parsing line above, and beyond the prefix-code
and code-factorization special case, where synchronizing pairs and bounded windows already recover codeword
boundaries~\cite{restivo1975synchronization, fici2025morphisms}, we have not found prior work that defines or decides
in general, for competing prioritized token languages under maximal munch, a grammar-derived bounded word whose every
occurrence identifies the origin of the covering token at an arbitrary cut, nor prior work instantiating
resolving-block or local-decoding machinery for prioritized maximal-munch token origins, nor a derivation of this
conservative compiled-table cloud model; the closest results decide maximality from a known boundary, bound retained
memory, assume the boundaries they schedule, recover a state without an origin, or validate a concrete overlap or
splice after local retokenization. We are likewise not aware of an implementation that derives certified windows from
a compiled token set and checks them against a running scanner, other than the instruments this paper reports
and their supported successor in release 1.4.0.

\section{Limitations}
\label{sec:limitations}

All deliberate. The soundness proof speaks only of completely tokenizable inputs: malformed input is outside the proof,
and no consumed-prefix analogue is claimed. Negatives are model-relative: the model refuses windows a greedy scanner
would allow, so an exhausted search means no window \emph{under this model}, never that none exists. The worst case is
exponential and the probe is budgeted, though the retained keys stayed below 200 on every named row, at most 32 with
mean 9.9 among the 337 no-byte grammars. The evaluated v1.3.3 artifact ships no window-planning API, and its probe
excludes the nullable sets, so the sweep of Section~\ref{sec:evaluation} is reproduced at the later commit named there;
its certificate machinery is a probe, on the principle that the certificate's formulation should freeze in this paper
before becoming a public contract. Release 1.4.0 made the formulation a public contract, \code{is\_split\_window()},
which the sweep at the later commit asks beside its own model at the coverage its source states, every one-byte answer
of the named rows and of the 400 random token sets, every model-positive random two-byte answer and the named
certificates and refusals, the count of those checks pinned; the window planner that release added beside the decision
lies outside this paper's evaluation. And no representative real-corpus evidence exists yet; window occurrence frequency
is the measurement campaign's question.

\section{Conclusion}
\label{sec:conclusion}

A certified split window $(W, o)$ of a compiled token set is a byte string such that in every completely
tokenizable input containing $W$, the token covering the occurrence's final byte begins exactly $o$ bytes into
it. Model certification is decided
from the compiled tables alone by running a conservative cloud of token-prefix hypotheses across the window,
seeding new trajectories only where the automaton accepts, and demanding unanimity on an in-window origin. The
soundness argument is a representation invariant: the final segmentation's actual token-prefix history is always
among the hypotheses, so unanimity can only land on the truth, and maximal-munch backup never needs simulating,
because the model tracks where tokens begin rather than where the read head wanders.

The search is a decision procedure for the model, not a heuristic: a quotient of the reachable clouds, exact under the
stated single-occupancy invariant, makes breadth-first search terminate by exhaustion, so every answer is a certified
window with its origin or a proof that the model admits none at any length. The three-layer honesty is deliberate and
permanent. Model-positive answers are semantically certified within the stated flat, completely-tokenizable scope.
Negatives are model-relative: the model refuses windows a greedy scanner would allow, the two asserted witnesses exhibit
the over-approximation, and the strictness corollary locates its minimum nonvacuous length at two, since at length one
the model provably coincides with the published predicate, made exact for occurring bytes by the predecessor's necessity
theorem. Over a prefix code the certificate is code synchronization under another name, the occurring certified windows
being the synchronizing splits whose right half sits inside one codeword, and the named token sets studied here are not
codes. A token set in which some token matches the empty string is decided through its positive-width equivalent, which
unrolls the accepting start state and changes no scan, so nothing is set aside on that account.

The generalization does what it was built for: it recovers every exact-empty row of the predecessor's table, all
witnessed. Of the 337 random token sets certifying no byte, 322 gain a witnessed certified window with zero inconclusive
searches; the predecessor's six exact-empty rows gain witnessed windows of two to four bytes, with the concrete windows
in Table~\ref{tab:windows}. Separate rewind-stress rows exercised 1{,}079{,}392 generated executions that scanned
through the window and contained at least one rewind, with zero disagreements against the shipped scanner, and the
figures are printed and asserted in the artifact, so they fail the test suite if they move. What this paper deliberately
does not deliver is the cut itself: the v1.3.3 artifact evaluated here ships a planner over single-byte certificates
only, itself not evaluated here, and the window planner that release 1.4.0 added beside the decision is outside this
evaluation; turning a window's boundary into a parallel cut is an explicit composition with the predecessor's
prefix-stability result, and whether windows occur often enough in real corpora to plan balanced chunks is an empirical
question on which this paper reports and relies on no controlled evaluation; the artifact archives exploratory preview
measurements only. That evaluation, on representative corpora with planning times, occurrence frequencies, and
end-to-end comparisons against byte certificates, is the natural next step and is not claimed here.

The contribution is therefore narrow and exactly bounded, in the same sense as its predecessor: not another way
to recover context, but the certification step, extended from single bytes to bounded windows, with the origin
recovered and the conservatism exhibited. Certified windows synchronize token-origin information in an enriched
cloud model, and need not synchronize the token DFA itself; the predecessor's one-way reset-word connection
applies at length one, and no general implication extends to longer windows.

\section*{Tools}

Large language models assisted with drafting, with checking citations against primary records and with reviewing the
implementation; the author verified each suggestion by derivation, against primary sources, or against the
implementation, its tests and the archived artifacts, and is responsible for all content.

\bibliographystyle{plainnat}
\bibliography{refs}

\end{document}